\newtheorem{teo}{Theorem}
\newtheorem{proposition}[teo]{Proposition}
\newtheorem{corollary}[teo]{Corollary}
\newtheorem{lemma}[teo]{Lemma}
    \definecolor{gray1}{rgb}{0,0,.8}
    \definecolor{gray2}{rgb}{0,1,0}
    \definecolor{gray3}{rgb}{0.8,0,.2}
\renewcommand{\P}{\mathds{P}}
\newcommand{\argmin}[1]{\underset{#1}{\mathrm{argmin\,}}}
\renewcommand{\natural}{\mathbb{N}}
\newcommand{\integernonnegative}{\mathbb{Z}_{\ge 0}}
\newcommand{\R}{\mathbb{R}} % used for real
\newcommand{\N}{\mathbb{N}}  % used for natural
\newcommand{\G}{\mathcal{G}} % graph
\newcommand{\V}{\mathcal{V}} % node set
\newcommand{\E}{\mathcal{E}} % edge set
\newcommand{\eps}{\varepsilon}
\newcommand{\card}[1]{|#1|}  	% cardinality
\newcommand{\Exp}{\mathds{E}} % expectation
\newcommand{\1}{\mathbf{1}} % vector of ones
\title{Almost sure convergence of a randomized algorithm for relative localization in sensor networks}%
\author{Chiara~Ravazzi\thanks{Chiara Ravazzi is with the Department of Electronics and Telecommunications (DET),
        Politecnico di Torino, Italy. E-mail: {chiara.ravazzi@polito.it}} \and Paolo~Frasca\thanks{Paolo Frasca is with the Department of Mathematical Sciences (DISMA), Politecnico di Torino, Italy. E-mail: {paolo.frasca@polito.it}} 
\and Roberto~Tempo\thanks{Roberto~Tempo is with CNR-IEIIT, Politecnico di Torino, Italy. E-mail: roberto.tempo@polito.it}
\and Hideaki~Ishii\thanks{Hideaki~Ishii is with the Department of Computational Intelligence and Systems Science,        Tokyo Institute of Technology, Japan. E-mail:        {ishii@dis.titech.ac.jp}} 
}
\begin{document}

\maketitle

\begin{abstract}
This paper regards the relative localization problem in sensor networks. 
We study a randomized algorithm, which is based on input-driven consensus dynamics and involves pairwise ``gossip'' communications and updates.
Due to the randomness of the updates, the state of this algorithm ergodically oscillates around a limit value. Exploiting the ergodicity of the dynamics, we show that the time-average of the state almost surely converges to the least-squares solution of the localization problem. Remarkably, the computation of the time-average does not require the sensors to share any common clock. Hence, the proposed algorithm is fully distributed and asynchronous.
\end{abstract}

%\begin{IEEEkeywords}
%Clock synchronization, Gossip algorithms, Least-squares problem, quadratic optimization problem, Self-localization
%\end{IEEEkeywords}

%\tableofcontents
\section{Introduction}

We consider the problem of relative localization for sensor networks that can be described as follows. We assume to have a group of agents representing the nodes of a graph, and a vector, indexed over the agents and unknown to them. The agents are allowed to take noisy measurements of the differences between their vector entries and those of their neighbors in the graph. The estimation problem consists in reconstructing the original vector, up to an additive constant. While an optimal solution can be easily found by a centralized least-squares approach, we are interested in finding effective \textit{distributed} solutions. More precisely, a solution is said to be distributed if it requires each node to use information which is available only at the node itself or from its immediate neighbors.
Following this approach, we have recently proposed~\cite{CR-PF-HI-RT:13a} a randomized ``gossip'' algorithm for distributed relative localization.
This algorithm, which is inspired by a gradient descent approach, involves the activation of a randomly chosen pair of neighboring nodes at each time step.
In our previous work, we have already given a convergence result for the algorithm: the mean-square error between the time-average of the states and the optimal solution asymptotically goes to zero.

In this paper, we study the algorithm using tools from ergodic theory and we obtain a related convergence result: the time-average of the states converges almost surely to the optimal solution. Significantly, our definition of time-average does not require the agents to be aware of any global clock or of any global variable which counts the number of interactions occurring on the network.

\subsection{State of the art}
The abstract problem of relative localization is deemed to have important applications in sensor and robotic networks~\cite{PB-JPH:07}. The proposed applications cover 
synchronization of uncertain clocks~\cite{AG-PRK:06a},  
as well as spacial localization~\cite{MD-JL-DR-ST:04} in mobile robotic networks when no absolute position information is available.
In the control literature, the problem has been popularized by~\cite{PB-JPH:07,PB-JPH:09,PB-JPH:08}. Distributed algorithms involving synchronized updates by the nodes have been proposed in the last few years~\cite{PB-JPH:07,AG-PRK:06a}. The latter paper uses a gradient-descent technique to solve the problem: this approach has been later followed in~\cite{SB-SDF-LS-DV:10,WSR-PF-FF:12}, as well as in our previous work~\cite{CR-PF-HI-RT:13a}.
Recently, papers have started to consider randomized asynchronous and randomized algorithms to solve the localization problem~\cite{RC-LS:12,RC-EDE-SZ:11,NMF-AZ:12}, see~\cite{RT-GC-FD:12} for details about randomized algorithms.

\subsection{Contribution}
In this work, we prove an ergodic theorem for the algorithm in~\cite[Eq.~(10)]{CR-PF-HI-RT:13a} and we show that a suitable time-averaging operation removes the persistent random oscillations which affect the ``raw'' estimates obtained through gossip communication and updates. The resulting time-averaged state converges {\em almost surely} to the optimal solution of the relative localization problem.

The ergodicity analysis of seemingly non-convergent dynamics started to attract attention in the theory of multi-agent systems and distributed control only recently.  
For example, we remark that the algorithm under study presents strong similarities with randomized algorithms for the PageRank computation, which have been recently proposed in several papers~\cite{HI-RT:10,HI-RT-EWB:12a,HI-RT-EWB:12b}. Indeed, randomized PageRank dynamics have been shown to converge modulo time-averaging: available results cover both convergence in mean square~\cite{HI-RT:10} and almost sure~\cite{WZ-HC-HF:13}. We stress, however, that our proof of almost sure convergence is based on completely different tools from~\cite{WZ-HC-HF:13}, which uses techniques from stochastic approximation.  
Related applications of ergodic theory can be found in the context of social networks~\cite{GC-FF:10,DA-GC-FF-AO:11}, where the authors show the ergodicity of specific opinion dynamics, which extend the well-known consensus dynamics to incorporate external influences and heterogeneous (stubborn) agents. Our analysis owes much to these techniques. 

Furthermore, we stress that ergodicity is a key property in enabling our framework to easily accommodate time-averages. Hence, we study an algorithm which is fully asynchronous and distributed. On the contrary, prior work on randomized algorithms assumed time-averages to be computed using a global iteration counter, see~\cite{HI-RT:10,WZ-HC-HF:13}.

\subsection{Organization of the paper}
We formally present the problem of relative localization in Section~\ref{sect:localization-problem}. Then, in Section~\ref{sect:algo-main-results} we define the randomized algorithm and state the main convergence results. These results are proved by the analysis presented in Section~\ref{sect:analysis}. We conclude the paper with some remarks on future research in the final section. 

\section{The problem of relative localization}\label{sect:localization-problem}
We consider a set of nodes $\mathcal{V}$ of cardinality $N$, endowed with an unknown scalar quantity $\bar x_v$ for $v\in\mathcal{V}$.
The relative localization problem
 consists, for each node $u\in \mathcal{V}$, in estimating the scalar value $\bar x_u$, based on noisy measurements of differences $\bar x_u-\bar x_v$ with certain neighbors $v$. 
An oriented
graph $\mathcal{G}=(\mathcal{V},\mathcal{E})$ is used to represent the
available measurements. 
The orientation of the pairs is conventionally assumed to be such that $(v,u)\in\mathcal{E}$ only if $u<v$.
We let $A\in\{0,\pm1\}^{\mathcal{E}\times\mathcal{V}}$ be the incidence matrix of the graph $\mathcal{G}$, which is defined as 
$$
A_{ew}=\begin{cases}
+1&\text{if } e= (v,w)\\
-1&\text{if } e=(w,v)\\
0&\text{otherwise}\\
\end{cases}
$$ for every $e\in \mathcal{E}$. 
We let $b\in \mathbb{R}^{\mathcal{E}} $ be the vector collecting the measurements
$$
b=A \bar x+\eta,
$$
where $\eta\in\mathbb{R}^{\mathcal{E}}$ is random noise with $\Exp[\eta]=0$ and $\Exp[\eta\eta^\top ]=\sigma^2I$, and $I$ is the identity matrix.
We define the set of the optimal estimates in a least squares sense as
\begin{equation}\label{LS}
X=\argmin{z\in\R^N}\|Az-b\|_2^2
\end{equation}
 where $\|\cdot\|_2$ is the Euclidean norm.
The set $X$ is described in the following well-known lemma \cite{PB-JPH:07}.
\begin{lemma}[Centralized solution]\label{lemma:centralized-LS} 
Let the graph $\mathcal{G}$ be connected and let $L:=A^\top A$ denote the Laplacian of the graph.
The following facts hold:
\begin{enumerate}
\item $x \in X$ if and only if $A^\top Ax =A^\top b$;
\item there exists a unique $x^{\star}\in X$ such that $||x^{\star}||_2=\min_{z\in X}||z||_2$;
\item $x^{\star}=L^{\dag} A^\top b$,
where $L^{\dag} $ denotes the Moore-Penrose pseudo-inverse of the Laplacian 
$L$.
\end{enumerate}
\end{lemma}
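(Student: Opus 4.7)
The proof is standard linear algebra, and the three claims will fall in sequence once one identifies the correct algebraic facts. Here is how I would approach them.

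For part (1), the plan is to invoke the first-order optimality conditions for the convex quadratic $f(z)=\|Az-b\|_2^2$. Since $f$ is smooth and convex, $x$ is a minimizer if and only if $\nabla f(x)=0$, and the gradient is $2A^\top(Ax-b)$. Setting it to zero gives exactly the normal equations $A^\top A x = A^\top b$, i.e., $Lx=A^\top b$. There is nothing delicate here beyond recalling that, for a convex function, vanishing gradient is both necessary and sufficient for global minimality.

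For part (2), I would first observe that the normal equations are always consistent: $A^\top b$ lies in $\mathrm{Im}(A^\top)$, and a basic fact in linear algebra is that $\mathrm{Im}(A^\top A)=\mathrm{Im}(A^\top)$ (the two spaces are equal because $\ker(A^\top A)=\ker(A)$ and images are orthogonal complements of kernels of transposes). Hence $X$ is non-empty. Next, part~(1) says that $X$ is the solution set of a linear system, so it is an affine subspace of $\R^N$, in fact a translate of $\ker(L)$. The minimum-norm element of a non-empty closed affine subspace of a Hilbert space exists and is unique, as the orthogonal projection of the origin onto it. This gives the existence and uniqueness of $x^\star$.

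For part (3), I would verify directly that $x^\star:=L^\dagger A^\top b$ has both required properties. First, consistency: using $L L^\dagger L = L$ and $\mathrm{Im}(L)=\mathrm{Im}(A^\top)$, one gets $L\,L^\dagger A^\top b = A^\top b$, so $x^\star$ satisfies the normal equations and hence lies in $X$. Second, minimum norm: by construction $\mathrm{Im}(L^\dagger)=\mathrm{Im}(L^\top)=\mathrm{Im}(L)$ (since $L$ is symmetric), and $\mathrm{Im}(L)=\ker(L)^\perp$. Thus $x^\star\perp\ker(L)$, i.e., $x^\star$ is orthogonal to the translation space of the affine set $X$, which is exactly the characterization of the minimum-norm element obtained in part~(2).

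The whole proof is essentially bookkeeping about the four fundamental subspaces; the only step that deserves care is the identity $\mathrm{Im}(A^\top A)=\mathrm{Im}(A^\top)$, which is what makes $L^\dagger A^\top b$ an honest solution of the normal equations rather than merely an approximation. Connectivity of $\mathcal{G}$ is not needed for the three claims themselves; it only enters later, when one wants to identify $\ker(L)$ with $\mathrm{span}(\mathbf{1})$ and interpret the degree of freedom in $X$ as the single additive constant alluded to in the introduction.
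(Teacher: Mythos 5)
Your proof is correct: the normal-equations characterization via the vanishing gradient of the convex quadratic, the consistency argument from $\mathrm{Im}(A^\top A)=\mathrm{Im}(A^\top)$, and the verification that $L^\dagger A^\top b$ lies in $\ker(L)^\perp$ and is therefore the unique minimum-norm solution are all sound, and your observation that connectivity is only needed to identify $\ker(L)$ with $\mathrm{span}(\mathbf{1})$ is accurate. The paper itself gives no proof --- it states the lemma as well known and cites Barooah and Hespanha --- and your argument is exactly the standard one that reference would supply, so there is nothing to reconcile.
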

Note that $x^\star$ is the minimum-norm element of the affine space of solutions of~\eqref{LS}. Indeed, $A x^\star=A(x^\star+c\1)$ for all scalar $c$, where $\1$ denotes the vector of ones.

In this work, we are interested in designing randomized asynchronous algorithms to solve this problem: in view of the result above, we shall assume from now on that $\G$ is {\em connected}.

\section{Algorithm description and main results}\label{sect:algo-main-results}
This section is devoted to describe the algorithm which was proposed in~\cite[Section~V]{CR-PF-HI-RT:13a} and to state our main results.

The algorithm involves for each node $v\in \V$ a triple of states $(x_v,\kappa_v,\tilde x_v)$, depending on a discrete time index $k\in \integernonnegative$: these three variables play the following roles: $x_u(k)$ is the ``raw'' estimate of $\bar x_v$ obtained by $v$ at time $k$ through communications with its neighbors, $\kappa_v(k)$ counts the number of updates performed by $v$ up to time $k$, and $\tilde x_v(k)$ is the ``smoothed'' estimate obtained through time-averaging.
The algorithm is defined by choosing a scalar parameter $\gamma\in(0,1)$ and a sequence of random variables ${\theta(k)}_{k\in \integernonnegative}$ taking values in $\E$. 
At each time $k$, provided that $\theta(k)=(u,v)$, the state updates are performed according to the following rules:
the estimates evolve as 
\begin{subequations} \label{dyn2}
\begin{equation} \label{dyn2a}
\begin{split}
x_u(k+1)&=(1-\gamma)x_u(k)+ \gamma x_v(k)+ \gamma b_{(u,v)}\\
x_v(k+1)&=(1-\gamma)x_v(k)+ \gamma x_u(k)-\gamma b_{(u,v)}\\
x_w(k+1)&=x_w(k)\qquad \text{if }w\notin\{u,v\};\\
\end{split}
\end{equation}
the local times as
\begin{equation}
\begin{split}
\kappa_u(k+1)&=\kappa_u(k)+1\\
\kappa_v(k+1)&=\kappa_v(k)+1\\
\kappa_w(k+1)&=\kappa_w(k)\qquad \text{if }w\notin\{u,v\};\\
\end{split}
\end{equation}
and the time-averages as
\begin{equation}
\begin{split}\label{dyn2b}
\widetilde x_u(k+1)&=\frac1{\kappa_u(k+1)}\big( \kappa_u(k) \widetilde x_u(k) + x_u(k+1)\big)\\
\widetilde x_v(k+1)&=\frac1{\kappa_v(k+1)}\big( \kappa_v(k) \widetilde x_v(k) + x_v(k+1)\big)\\
\widetilde x_w(k+1)&=\widetilde x_w(k)\qquad \text{if }w\notin\{u,v\}.
\end{split}
\end{equation}
\label{eq:gossip-algo-ki}
\end{subequations}
We assume the sequence $\{\theta(k)\}_{k\in \integernonnegative}$ to be i.i.d., and its probability distribution to be uniform, {\it i.e.},
\begin{equation}\label{eq:uniform}
\P[\theta(k)=(u,v)]=\frac{1}{|\mathcal{E}|}, \qquad\forall k\in\integernonnegative,
\end{equation}
where $\card{\E}$ denotes the cardinality of $\E$. Note that this choice is made without loss of generality and convenience: the same approach may accommodate other distributions, as required by the application.

It should be noted that the time index $k$ in fact counts the number of updates which have occurred in the network, whereas for each $u\in \mathcal{V}$ the variable $\kappa_u(k)$ is the number of updates involving $u$ up to the current time. Hence, $\kappa_u$ is a local variable which is inherently known to agent $u$, even in case the common clock $k$ is unavailable.
This algorithm is totally asynchronous and fully distributed, in the sense that the time averaging process does not need the nodes to be aware of a common clock. This point provides a major improvement with respect to our previous work~\cite[Theorem~3]{CR-PF-HI-RT:13a}.

The dynamics in \eqref{dyn2a} oscillates persistently and fails to converge in a deterministic sense, as shown in Figure~\ref{fig:simul-complete}. 
\begin{figure}[h]
\begin{center}
\includegraphics[width=.8\columnwidth]{./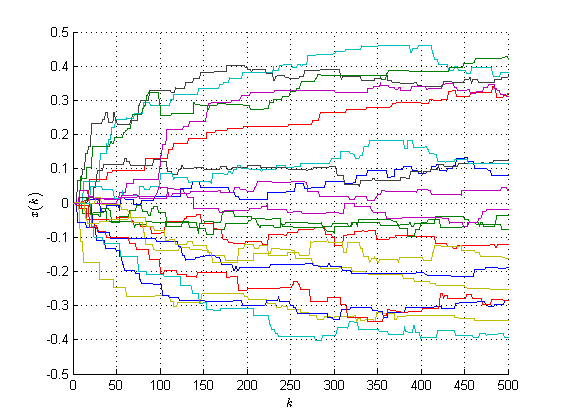}
\includegraphics[width=.8\columnwidth]{./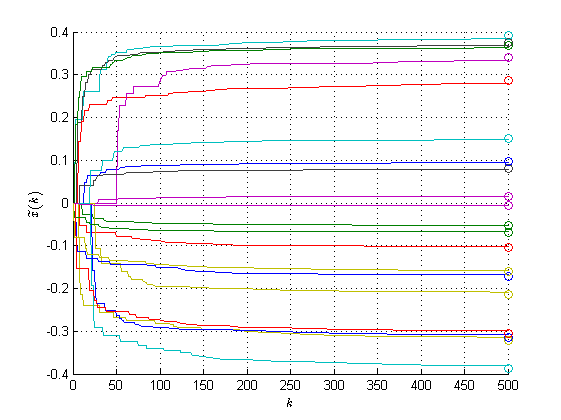}
\caption{Complete graph, $N=20$, $\gamma=0.1$, zero initial conditions.}
\label{fig:simul-complete}
\end{center}
\end{figure}
However, the oscillations concentrate around the solution of the least squares problem, as it is formally stated in the following two results. 
The first result regards the behavior of the average dynamics.
\begin{proposition}[Convergence in expectation]\label{prop:1st-order} Consider the dynamics~\eqref{dyn2a} with uniform edge selection~\eqref{eq:uniform}. %  and let $x^{\star}$ be the optimal least-squares solution given in \eqref{LS}. 
Then,
\begin{align*}
\lim_{k\rightarrow+\infty}\Exp[x (k)]=x^{\star}.
\end{align*}
\end{proposition}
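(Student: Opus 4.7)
The plan is to write the one-step update in a compact linear-affine form, take conditional expectations to get a deterministic affine recursion for $\Exp[x(k)]$, and then analyze its fixed point using the spectrum of the graph Laplacian.

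First, I would encode \eqref{dyn2a} as a rank-one perturbation. For $\theta(k) = (u,v) \in \mathcal{E}$, let $a_e$ denote the $e$-th row of $A$ transposed (so $a_e = \mathrm{sign}(u{-}v)(e_u - e_v)$ with the paper's orientation convention). A direct check shows \eqref{dyn2a} is equivalent to
$$x(k+1) = \bigl(I - \gamma\, a_{\theta(k)} a_{\theta(k)}^\top\bigr) x(k) + \gamma\, a_{\theta(k)}\, b_{\theta(k)}.$$
Conditioning on $x(k)$ and averaging over $\theta(k)$ under \eqref{eq:uniform}, using the identities $\sum_{e\in\mathcal{E}} a_e a_e^\top = A^\top A = L$ and $\sum_{e\in\mathcal{E}} a_e b_e = A^\top b$, yields
$$\Exp[x(k+1)\mid x(k)] = M\, x(k) + c, \qquad M := I - \tfrac{\gamma}{|\mathcal{E}|} L, \quad c := \tfrac{\gamma}{|\mathcal{E}|} A^\top b.$$
By the tower property, $m(k) := \Exp[x(k)]$ obeys the deterministic affine recursion $m(k+1) = M m(k) + c$.

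Second, I would analyze the spectrum of $M$. The Laplacian $L$ is symmetric positive semidefinite with a simple zero eigenvalue along $\mathbf{1}$ (by connectivity of $\mathcal{G}$), and $\mathrm{tr}(L) = 2|\mathcal{E}|$ forces $\lambda_i(L) \leq 2|\mathcal{E}|$ for every $i$. Hence $M$ has spectrum in $[1-2\gamma, 1]$, with a simple eigenvalue $1$ along $\mathbf{1}$; since $\gamma \in (0,1)$, $M$ is a strict contraction on $\mathbf{1}^{\perp}$. Moreover $c \in \mathbf{1}^{\perp}$ because $A \mathbf{1} = 0$.

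Third, I would decompose $m(k)$ along the two invariant subspaces. Since $M \mathbf{1} = \mathbf{1}$ and $c \perp \mathbf{1}$, the component $\mathbf{1}^\top m(k)$ is preserved, hence equal to $\mathbf{1}^\top m(0)$, which vanishes under the standing zero-initial-condition assumption adopted in the simulations. On $\mathbf{1}^{\perp}$, the Neumann series $\sum_{j=0}^{k-1} M^j c$ converges, and its limit $x_\infty$ is the unique element of $\mathbf{1}^{\perp}$ solving $(I-M) x_\infty = c$, i.e., $L x_\infty = A^\top b$. By Lemma~\ref{lemma:centralized-LS}, this forces $x_\infty = L^{\dag} A^\top b = x^\star$, and combining the two components yields $m(k) \to x^\star$.

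No step is truly hard: everything reduces to a contraction argument on $\mathbf{1}^\perp$. The only subtle point is the bookkeeping of the kernel direction of $L$, and the implicit (zero-mean) initialization required to pin down the \emph{minimum-norm} solution $x^\star$ rather than an arbitrary element of the affine set $X$ in Lemma~\ref{lemma:centralized-LS}.
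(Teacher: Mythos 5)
Your proof is correct and follows essentially the same route as the paper: Lemma~\ref{lemma: expected_value} gives exactly your expected affine recursion $\Exp[x(k+1)]=\bigl(I-\tfrac{\gamma}{|\mathcal{E}|}L\bigr)\Exp[x(k)]+\tfrac{\gamma}{|\mathcal{E}|}A^\top b$, and the paper obtains convergence (deferring details to Proposition~6 of the companion paper) from the same contraction of the expected update on $\mathbf{1}^\perp$ together with the fixed-point identity $Lx^\star=A^\top b$. Your observation that the limit is the minimum-norm solution $x^\star$ (rather than $x^\star+\tfrac{\mathbf{1}^\top x(0)}{N}\mathbf{1}$) only under the condition $\mathbf{1}^\top x(0)=0$ is accurate; the paper states $x(0)=0$ explicitly only in Theorem~\ref{as_convergence}, but relies on it here as well.
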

The second result, instead, states that the sample dynamics is well-represented by the average one, {\it i.e.}, the process is ergodic. 
\begin{teo}[Almost sure convergence of ergodic means]\label{as_convergence} The dynamics in \eqref{dyn2a}, with uniform edge selection~\eqref{eq:uniform} and $x(0)=0$, is {\em ergodic}. If $\{n_\ell\}_{\ell\in \mathbb{N}}$ is a sequence of nonnegative integers, then with probability one
$$
\lim_{k\rightarrow\infty}\frac{1}{k}\sum_{\ell=1}^{k} x(n_\ell){=}x^{\star}.
$$
\end{teo}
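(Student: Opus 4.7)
My approach is to view the dynamics~\eqref{dyn2a} as an iterated random affine map driven by the i.i.d.\ sequence $\{\theta(k)\}$ and to exploit its contractive and ergodic structure on an invariant hyperplane. First, I would rewrite~\eqref{dyn2a} in the compact form $x(k+1) = P_{\theta(k)}\, x(k) + \gamma\, c_{\theta(k)}$, where, for $\theta = (u,v) \in \mathcal{E}$ and $w_\theta := e_u - e_v$, the matrix $P_\theta := I - \gamma\, w_\theta w_\theta^\top$ is symmetric doubly stochastic (with $\|w_\theta\|^2 = 2$) and $c_\theta$ is the corresponding measurement vector supported on $\{u,v\}$. Since $\mathbf{1}^\top P_\theta = \mathbf{1}^\top$ and $\mathbf{1}^\top c_\theta = 0$, the hyperplane $H := \{y\in\R^{\mathcal{V}} : \mathbf{1}^\top y = 0\}$ is forward-invariant; as $x(0) = 0 \in H$ and $x^\star \in H$ by Lemma~\ref{lemma:centralized-LS}, the whole analysis reduces to a Markov chain living in $H$. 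Using~\eqref{eq:uniform} and the identity $\sum_\theta w_\theta w_\theta^\top = L$, the mean one-step operator is $\bar P := \Exp[P_\theta] = I - \frac{\gamma}{|\mathcal{E}|} L$, whose restriction to $H$ is a strict contraction with unique fixed point $x^\star = L^\dag A^\top b$; this immediately delivers Proposition~\ref{prop:1st-order}.

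Second, to identify a unique invariant measure for $\{x(k)\}$ I would apply the classical backward-product (time-reversal) trick: the random vector $\widetilde x_k := \sum_{j=0}^{k-1} \Pi_j\, \gamma\, c_{\theta(j)}$ with $\Pi_j := P_{\theta(0)}\cdots P_{\theta(j-1)}$ has the same distribution as $x(k)$ but is a cumulative partial sum, so its a.s.\ convergence yields the limiting law directly. Using $P_\theta^2 = I - 2\gamma(1-\gamma)\, w_\theta w_\theta^\top$, a conditioning argument gives $\lambda_{\max}(\Exp[\Pi_j^\top\Pi_j]|_H) \le \rho^j$, whence $\Exp[\|\Pi_j h\|^2] \le \rho^j \|h\|^2$ for every $h \in H$, where $\rho := 1 - \frac{2\gamma(1-\gamma)}{|\mathcal{E}|}\mu_2 < 1$ and $\mu_2$ is the smallest nonzero eigenvalue of $L$ (which is positive since $\mathcal{G}$ is connected). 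A Borel--Cantelli argument along dyadic blocks upgrades this mean-square decay into a.s.\ geometric contraction of $\Pi_j$ on $H$, so the series defining $\widetilde x_k$ converges absolutely almost surely to an integrable random vector $\xi$; its law $\pi$ is the unique invariant measure of the chain, and a term-by-term expectation yields $\Exp[\xi] = \gamma \sum_{j\ge 0} \bar P^j\, \bar c = x^\star$.

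Finally, the a.s.\ convergence of the Cesàro averages follows from the ergodic theorem for geometrically ergodic Markov chains, giving $\frac{1}{k}\sum_{\ell=1}^k x(\ell) \to \Exp[\xi] = x^\star$ almost surely. To handle an arbitrary divergent subsequence $\{n_\ell\}$ as in the statement---which is precisely what is needed to later analyse the asynchronous time-averaging $\widetilde x_v$ driven by the local counters $\kappa_v$ in~\eqref{dyn2b}---I would combine the exponential covariance decay $\|\Exp[(x(m)-x^\star)(x(n)-x^\star)^\top]\| \le C\rho^{|m-n|/2}$ implied by Step~2 with a second-moment bound yielding $\Exp\bigl\|\frac{1}{k}\sum_{\ell=1}^{k} x(n_\ell)-x^\star\bigr\|^2 = O(1/k)$, and a standard Borel--Cantelli along a polynomially dense subsequence of $k$'s, interpolated by a sandwich argument. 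The principal obstacle is twofold: (i) promoting the mean-square contraction of $\Pi_j$ on $H$ to an almost-sure exponential contraction---nontrivial because individual $P_\theta$ are only non-expansive, so no pointwise contraction can be invoked and one must work with block-averaged quantities and concentration along dyadic scales---and (ii) extending Cesàro convergence to truly arbitrary divergent subsequences, which goes beyond the classical Birkhoff theorem and requires uniform quantitative mixing estimates.
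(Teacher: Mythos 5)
Your proposal is correct and follows the same overall strategy as the paper: recast \eqref{dyn2a} as an iterated random affine map on the hyperplane orthogonal to $\1$, pass to the backward product to obtain an a.s.\ limit $x_\infty$ whose law is the unique invariant measure, identify $\Exp[x_\infty]=L^{\dag}A^\top b=x^{\star}$ by term-by-term expectation, and transfer the law of large numbers from a stationary copy of the chain to the actual one via a geometric coupling estimate and Borel--Cantelli. The difference lies in the key quantitative lemma. The paper verifies the Diaconis--Freedman condition $\inf_k k^{-1}\Exp[\log\|\overleftarrow{P}(0,k-1)\|]<0$ by bounding $\Exp\|\overleftarrow{P}(0,k-1)\|_{1}$ through $\|\Exp[P]^k\|_{\infty}$ and a Jordan-block estimate, ending with $\rho=\|\Exp[P]\|_2<1$; you instead derive the mean-square contraction $\Exp\|\Pi_j h\|^2\le\rho^j\|h\|^2$ on the zero-sum hyperplane directly from $\Exp[P_\theta^2]=I-\tfrac{2\gamma(1-\gamma)}{|\mathcal{E}|}L$ by peeling off the outermost (independent) factor. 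Your route is more elementary and self-contained, yields the explicit rate $\rho=1-2\gamma(1-\gamma)\mu_2/|\mathcal{E}|$ with $\mu_2$ the algebraic connectivity, and incidentally sidesteps a delicate point in the paper's chain of inequalities, where absolute values are dropped even though $P(k)=Q(k)(I-\tfrac1n\1\1^\top)$ has negative entries. You are also more explicit than the paper on the subsequence claim, which the paper dismisses in one sentence even though Birkhoff's theorem does not hold along arbitrary subsequences of a stationary sequence; your plan (uniform second moments, exponential covariance decay, variance $O(1/k)$, Borel--Cantelli along $k=m^2$ plus a sandwich step) is the right way to make it rigorous. Two minor remarks: (i) the a.s.\ convergence of the backward series does not require upgrading to a.s.\ geometric contraction of $\Pi_j$, since $\sum_j\Exp\|\Pi_j \gamma c_{\theta(j)}\|\le \gamma\max_\theta\|c_\theta\|\sum_j\rho^{j/2}<\infty$ already gives a.s.\ absolute convergence; (ii) as stated (here and in the paper) the theorem implicitly needs $n_\ell\to\infty$, e.g.\ strictly increasing $n_\ell$ --- for a bounded or heavily repeating sequence both your variance bound and the paper's coupling argument fail, and indeed the claim is false for $n_\ell\equiv 0$.
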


From these two facts, which are proved in the next section, we immediately  deduce the following statement, which motivates the definition of the algorithm  previously introduced indicates that $\widetilde x_u(k)$ is the right variable'' to approximate the optimal estimate $x^{\star}_u.$
\begin{corollary}
The dynamics in~\eqref{dyn2b} is such that $$\lim_{k\to+\infty} \widetilde{x}(k)=x^{\star}$$
with probability one. \end{corollary}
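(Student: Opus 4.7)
The plan is to recognize each coordinate $\widetilde x_u(k)$ as a Cesaro average of the ``raw'' estimates $x_u$ sampled along the (random) subsequence of time indices at which node $u$ participates in a pairwise update, and then to invoke Theorem~\ref{as_convergence} along that subsequence.

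First I would unroll the recursion~\eqref{dyn2b}. Denote by $T_j^{(u)}$ the (almost surely finite) time at which $u$ is involved in its $j$-th interaction, formally $T_j^{(u)} := \min\{k\ge 0 : \kappa_u(k+1)=j\}$. A straightforward induction on $j$ applied to the identity
\[
\kappa_u(T_j^{(u)}+1)\,\widetilde x_u(T_j^{(u)}+1)=\kappa_u(T_j^{(u)})\,\widetilde x_u(T_j^{(u)})+x_u(T_j^{(u)}+1)
\]
yields $\widetilde x_u(T_j^{(u)}+1)=j^{-1}\sum_{i=1}^{j}x_u(T_i^{(u)}+1)$. Since $\widetilde x_u(k)$ does not change between two consecutive updates of $u$, I obtain the closed form
\[
\widetilde x_u(k)=\frac{1}{\kappa_u(k)}\sum_{i=1}^{\kappa_u(k)} x_u(T_i^{(u)}+1),\qquad k\ge T_1^{(u)}+1.
\]
Next I would verify that $\kappa_u(k)\to\infty$ almost surely: since $\{\theta(k)\}$ is i.i.d.\ uniform on $\mathcal{E}$ and $\mathcal{G}$ is connected, Kolmogorov's strong law of large numbers gives $\kappa_u(k)/k\to d_u/|\mathcal{E}|>0$ a.s., where $d_u\ge 1$ is the number of edges incident to $u$.

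It then remains to apply Theorem~\ref{as_convergence} to the integer sequence $\{T_\ell^{(u)}+1\}_{\ell\in\mathbb{N}}$, which, together with $\kappa_u(k)\to\infty$ and a union bound over the finite set $\mathcal{V}$, would deliver $\widetilde x(k)\to x^\star$ almost surely. The main (and essentially only) obstacle is that Theorem~\ref{as_convergence} is phrased for a \emph{deterministic} sequence $\{n_\ell\}$, whereas $\{T_\ell^{(u)}\}$ is random. I would resolve this by inspecting the ergodic argument underlying Theorem~\ref{as_convergence} and observing that it actually produces a \emph{single} event of full probability on which the Cesaro conclusion holds simultaneously for every deterministic sequence of nonnegative integers; equivalently, the exceptional null set can be chosen independently of $\{n_\ell\}$. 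On this common event one may specialize the statement pathwise to the realization $\{T_\ell^{(u)}(\omega)+1\}_\ell$, completing the proof.
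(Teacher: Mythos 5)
Your reduction is the right one and is exactly the route the corollary is meant to follow: unrolling \eqref{dyn2b} to exhibit $\widetilde x_u(k)$ as the Ces\`aro mean of the post-update values $x_u(T_i^{(u)}+1)$ along $u$'s own activation times, and noting $\kappa_u(k)\to\infty$ almost surely, correctly reduces the corollary to an ergodic statement along a subsequence. The gap lies in how you resolve the obstacle you rightly isolated. The claim that the proof of Theorem~\ref{as_convergence} furnishes a \emph{single} full-probability event on which the Ces\`aro conclusion holds simultaneously for every deterministic sequence is false. Indeed, if for a fixed sample path $\frac1k\sum_{\ell\le k}x(n_\ell)\to x^\star$ held for every increasing integer sequence $\{n_\ell\}$, then $x(k)\to x^\star$ along that path: otherwise some coordinate $u$ admits a subsequence along which $x_u$ stays at distance at least $\delta$ from $x^\star_u$ and on one side of it, and the Ces\`aro mean along that subsequence stays away from $x^\star_u$. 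A uniform null set would therefore force $x(k)\to x^\star$ almost surely, contradicting the persistent oscillations shown in Figure~\ref{fig:simul-complete} and asserted in the paper. The only part of the proof of Lemma~\ref{lemma:ergodic} that is uniform over subsequences is the coupling bound $\|x(k)-z(k)\|_1\le\eps^k$; the law-of-large-numbers step for the stationary process $z$ carries a sequence-dependent null set, so it cannot be specialized pathwise to the random realization $\{T_\ell^{(u)}(\omega)\}$.

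What is actually needed is to rerun the decomposition $x=z+(x-z)$ of Lemma~\ref{lemma:ergodic} directly on the weighted sums $\kappa_u(k)\,\widetilde x_u(k)=\sum_{s=0}^{k-1}x_u(s+1)\ind[u\in\theta(s)]$. The $(x-z)$ contribution is negligible pathwise exactly as in the paper; for the $z$ contribution one applies Birkhoff's theorem to the jointly stationary ergodic pair $(z(s),\theta(s))$ and the observable $z_u(s+1)\ind[u\in\theta(s)]$ (legitimate, since $z(s+1)$ is a function of $(z(s),\theta(s))$), and then divides by $\kappa_u(k)/k\to d_u/|\E|$. The resulting limit is $\frac{|\E|}{d_u}\Exp\bigl[z_u(1)\ind[u\in\theta(0)]\bigr]$, i.e., the mean of $(x_\infty)_u$ \emph{conditioned on $u$ having just been updated}, which is a priori different from $\Exp[(x_\infty)_u]$: sampling a node only at its own activation instants biases the average. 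The bias vanishes here for a non-trivial reason: since the increment of $z_u$ is zero off the event $\{u\in\theta(0)\}$ and $z(0)$ is independent of $\theta(0)$, one gets $\Exp\bigl[z_u(1)\ind[u\in\theta(0)]\bigr]=\frac{d_u}{|\E|}x^\star_u+\bigl((\Exp[Q]-I)x^\star+\Exp[y]\bigr)_u$, and by Lemma~\ref{lemma: expected_value} together with the normal equations $Lx^\star=A^\top b$ of Lemma~\ref{lemma:centralized-LS} the second term equals $\frac{\gamma}{|\E|}(A^\top b-Lx^\star)_u=0$. This verification that the sampling bias cancels precisely at the least-squares solution is the substantive step missing from your argument (and, in fairness, also glossed over by the paper's one-line deduction of the corollary).
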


\section{Ergodicity analysis}\label{sect:analysis}
In this section we study the convergence properties of the vector $ x(k)$ for the update model described in \eqref{dyn2a}. As shown in Fig. \ref{fig:simul-complete}, the estimate of each agent oscillates persistently; on the other hand the time averages approach the optimal solution $x^{\star}$ in Lemma \ref{lemma:centralized-LS}. Although the process $x(k)$ almost surely fails to converge, we prove that it converges in distribution to a random variable $x_{\infty}$ and is ergodic. %(see Theorem \ref{thm:d-conv}) and it is ergodic (see Theorem \ref{lemma:ergodic}).

To begin, we rewrite the dynamics of~\eqref{dyn2a} as
\begin{equation}\label{eq:dyn3-old}
x(k+1)=Q(k)x(k)+y(k)
\end{equation}
where $$Q(k)=I-\gamma(e_u-e_v)(e_u-e_v)^\top, $$ vector $e_u$ denotes the vector of the canonical basis corresponding to $u$, and $$y(k)=b_{\theta(k)}(e_u-e_v)$$ provided $\theta(k)=(u,v)$ with $k\in\integernonnegative$.
Consequently, we prove the following result either by direct computation or by using Lemma~5 in~\cite{CR-PF-HI-RT:13a}.
\begin{lemma}\label{lemma: expected_value} For the distribution~\eqref{eq:uniform},  it holds
\begin{gather*}
\Exp[Q(k)]=I-\gamma\frac{L}{|\mathcal{E}|},
\qquad
\Exp\left[ y(k)\right] = \gamma\frac{A^\top b}{|\mathcal{E}|}.
\end{gather*}
\end{lemma}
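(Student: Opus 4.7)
The plan is a direct computation using linearity of expectation over the uniform distribution on edges. The two ingredients that do all the work are the identities
$$
L \;=\; A^\top A \;=\; \sum_{(u,v)\in\E}(e_u-e_v)(e_u-e_v)^\top,
\qquad
A^\top b \;=\; \sum_{(u,v)\in\E} b_{(u,v)}(e_u-e_v),
$$
both of which follow at once from the definition of the incidence matrix: the row of $A$ indexed by an edge $e=(u,v)\in\E$ is precisely the row vector $(e_u-e_v)^\top$. Hence $A=\sum_e \mathbf{e}_e(e_u-e_v)^\top$, and multiplying on the left by $A^\top$ (respectively applying to $b$) produces the two formulas above.

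For the first claim, I would write
$$
\Exp[Q(k)] \;=\; I \;-\; \gamma\sum_{(u,v)\in\E}\P[\theta(k)=(u,v)]\,(e_u-e_v)(e_u-e_v)^\top,
$$
substitute $\P[\theta(k)=(u,v)]=1/|\E|$ from~\eqref{eq:uniform}, and recognise the resulting sum as $L$, which gives $\Exp[Q(k)]=I-\gamma L/|\E|$. For the second claim I would proceed identically: once the factor of $\gamma$ in $y(k)$ is made explicit so as to match the dynamics~\eqref{dyn2a}, one has
$$
\Exp[y(k)] \;=\; \gamma\sum_{(u,v)\in\E}\frac{1}{|\E|}\,b_{(u,v)}(e_u-e_v) \;=\; \frac{\gamma}{|\E|}\,A^\top b.
$$

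No step in this argument is genuinely hard; the whole proof is bookkeeping. The only thing to watch is the sign convention in the definition of $A$, to make sure the summation over directed edges really reproduces the (unsigned) Laplacian rather than some other combinatorial matrix. As the statement of the lemma itself suggests, an alternative is to simply invoke Lemma~5 of~\cite{CR-PF-HI-RT:13a}, which presumably already packages exactly this computation for a more general class of edge-selection distributions and then specialises to the uniform case.
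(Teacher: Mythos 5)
Your proof is correct and is exactly the ``direct computation'' the paper alludes to (the paper itself gives no written proof, only the remark that the lemma follows by direct computation or from Lemma~5 of~\cite{CR-PF-HI-RT:13a}): linearity of expectation over the uniform edge distribution combined with the edge-wise decompositions $L=\sum_{e}(e_u-e_v)(e_u-e_v)^\top$ and $A^\top b=\sum_{e}b_e(e_u-e_v)$. You were also right to flag the two notational wrinkles — the factor $\gamma$ omitted from the paper's displayed definition of $y(k)$ but required by~\eqref{dyn2a}, and the sign convention in $A$ (read literally, the paper's definition makes the row for $e=(u,v)$ equal to $(e_v-e_u)^\top$, but this only affects $A^\top b$ through an overall sign and the convention consistent with the dynamics is the one you adopt); neither affects the validity of the argument.
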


We note that  for all $k$ the matrix $Q(k)$ is doubly stochastic and the sum of the elements in $y(k)$ is zero, that is,
\begin{gather}
\begin{split}\label{d_stoch}
\1^\top Q(k)=\1,\quad Q(k)\1=\1,\quad\1^\top y(k)=0.
\end{split}
\end{gather}
In particular, if the vector $x(0)$ is initialized to zero, then $\1^\top x(k)=0$ for each $k\in\integernonnegative$.

These observations imply that the dynamics of $x(k)$ is equivalently described by the iterate
\begin{equation}\label{eq:dyn3}
x(k+1)=P(k)x(k)+y(k),
\end{equation}
where $P(k)=Q(k)(I-\frac1n\1\1^\top)=(I-\frac1n\1\1^\top)Q(k)$ is a projection of $Q(k)$ outside the ``consensus sub-space'' spanned by $\1$. This rewriting is instrumental to study the convergence behavior of the process $\{x(k)\}_{k\in\integernonnegative}$, by asymptotic techniques of iterated random functions, which we recall from~\cite{PD-DF:99}. These techniques require, in order to study the random process~\eqref{eq:dyn3}, to consider the associated {\em backward} process $\overleftarrow{x}(k)$, which we define below.

For any time instant $k$, consider the random matrices $P{(k)}$ and ${y}{(k)}$ and define the matrix product
\begin{equation}\label{P}
\overrightarrow{P}(\ell,m):=P{(m)}P{(m-1)}\cdots P{(\ell+1)}P{(\ell)}
\end{equation}
with $\ell\in\{0,\ldots,m\}.$
Then, the iterated affine system in~\eqref{eq:dyn3} can be rewritten as
\begin{equation}
 x(k+1)=\overrightarrow{P}(0,k)  x(0)+\sum_{0\leq \ell\leq k}\overrightarrow{P}(\ell+1,k)y(\ell).
\end{equation}
The corresponding \emph{backward process} is defined by
\begin{equation}
\overleftarrow{x}(k+1)=\overleftarrow{P}(0,k)  x(0)+\sum_{0\leq \ell\leq k}\overleftarrow{P}(0,\ell-1)y(\ell),
\end{equation}
where
\begin{equation}\label{Preverse}
\overleftarrow{P}(\ell,m):=P{(\ell)}P{(\ell+1)} \cdots P{(m-1)} P{(m)} 
\end{equation}
with $\ell\in\{0,\ldots,m\}.$
Crucially, the backward process $\overleftarrow{x}(k)$ has at every time $k\in\integernonnegative$ the same probability distribution of $x(k)$. 
The main tool to study the backward process is the following well-known result. 
\begin{lemma}[Theorem~2.1 in~\cite{PD-DF:99}]\label{Diaco}
Let us consider the Markov chain $\{X(k)\}_{k\in\N}$ defined by
$$X({k+1})={A(k)}X(k) + B(k)\quad k\in \integernonnegative$$ 
where $A(k)\in\R^{\V\times\V}$ and $B(k)\in\R^{\V}$ are i.i.d. random variables.
Let us assume that
\begin{equation}
\Exp[\log\|A(k)\|]<\infty\qquad
\Exp[\log\|B(k)\|]<\infty.
\end{equation}
The corresponding backward random process $\overleftarrow{X}(k)$ converges a.s. to a finite limit $X_{\infty}$ if and only if
\begin{equation}\label{eq:ip_Diaco}
\inf_{k>0}\frac{1}{k}\Exp\left[\log\|A(1)\ldots A(k)\|\right]<0.
\end{equation}
If \eqref{eq:ip_Diaco} holds, the distribution of $X_{\infty}$ is the unique invariant distribution for the Markov chain $X(k)$.
\end{lemma}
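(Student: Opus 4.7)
The plan is to prove the two directions of the ``iff'' separately and then deduce the characterization of the invariant distribution. First I would unroll the recursion to write
$$\overleftarrow{X}(k+1) = A(0)A(1)\cdots A(k)\, X(0) + \sum_{\ell=0}^{k} A(0)A(1)\cdots A(\ell-1)\, B(\ell),$$
with the convention that the empty product is the identity. Hence a.s.\ convergence of $\overleftarrow{X}(k)$ reduces to (i) the initial-condition term tending to $0$, and (ii) absolute summability of the random series.

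For the ``if'' direction, the key structural observation is that $a_k:=\Exp[\log\|A(1)\cdots A(k)\|]$ is subadditive in $k$: by independence of the $A(\cdot)$ and submultiplicativity of the operator norm, $a_{k+m}\leq a_k+a_m$. Fekete's lemma then gives $\lim_{k\to\infty}a_k/k=\inf_{k>0}a_k/k$, which is strictly negative by hypothesis. Kingman's subadditive ergodic theorem, applied to the stationary array $\log\|A(\ell+1)\cdots A(\ell+m)\|$, upgrades this to the pathwise statement $\frac{1}{k}\log\|A(0)\cdots A(k-1)\|\to\lambda$ almost surely with $\lambda<0$, so the leading matrix product decays geometrically a.s., yielding (i). For (ii), I would combine this geometric contraction with the log-integrability of $B$: Borel--Cantelli applied to $\Exp[\log^+\|B(\ell)\|]<\infty$ yields $\frac{1}{\ell}\log\|B(\ell)\|\to 0$ a.s., hence each summand is eventually bounded by $\exp((\lambda+\varepsilon)\ell)$ for any fixed $\varepsilon\in(0,|\lambda|)$, which gives absolute convergence.

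For the ``only if'' direction, I would argue contrapositively: if $\inf_k a_k/k\geq 0$, Kingman forces the top Lyapunov exponent to satisfy $\lambda\geq 0$, so $\|A(0)\cdots A(k-1)\|$ does not converge to $0$ a.s.; specializing to $B\equiv 0$ and a deterministic unit initial condition already shows that $\overleftarrow{X}(k)$ fails to converge. Invariance and uniqueness of the limit distribution then come from a standard coupling argument. Since $\overleftarrow{X}(k)$ and $X(k)$ have the same law at every $k$, the almost-sure limit $X_\infty$ is the in-distribution limit of the forward chain, and continuity of $(a,x,b)\mapsto ax+b$ forces its law to be invariant. For uniqueness, couple any two invariant laws through a common driving sequence $\{A(k),B(k)\}$: the difference satisfies $\Delta(k+1)=A(k)\Delta(k)$ and hence vanishes a.s.\ by the same exponential contraction, so the two laws coincide.

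The main obstacle I anticipate is step (ii). The hypothesis on $B$ provides only a log-moment, not a moment, so one cannot simply dominate $\|A(0)\cdots A(\ell-1)B(\ell)\|$ by a product of $L^1$ norms and apply Fubini. The resolution is necessarily pathwise: establish a.s.\ control on both factors---geometric decay of the matrix product via Kingman, and sub-exponential growth of $\|B(\ell)\|$ via Borel--Cantelli---and then combine them outcome by outcome. Packaging these two ingredients in the right order is what makes the argument of Diaconis and Freedman non-trivial despite the innocuous-looking affine recursion.
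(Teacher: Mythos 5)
The paper does not actually prove this lemma: it is imported as Theorem~2.1 of \cite{PD-DF:99} and only its ``if'' direction is ever used (in the proof of Lemma~\ref{lem:back_convergence}). So your proposal must be judged against the standard proof in the cited literature rather than against anything in the paper. Your ``if'' direction is that standard proof and is essentially correct: unrolling the backward recursion, subadditivity of $a_k=\Exp[\log\|A(1)\cdots A(k)\|]$ plus Fekete identifies $\lim_k a_k/k$ with the infimum, Kingman's theorem upgrades this to an a.s.\ negative top Lyapunov exponent, and Borel--Cantelli applied to the log-moment of $B$ controls the summands pathwise. One small imprecision: what Borel--Cantelli gives is $\limsup_\ell \ell^{-1}\log^+\|B(\ell)\|\le 0$ a.s., not your stated limit $\ell^{-1}\log\|B(\ell)\|\to 0$ (the $\liminf$ may be $-\infty$); the $\limsup$ bound is all you need, so this is cosmetic. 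The invariance and uniqueness arguments via coupling through a common driving sequence are also correct, and your diagnosis that the whole difficulty lies in combining the two pathwise estimates in the series term is exactly right.

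The genuine gap is in the ``only if'' direction. You cannot ``specialize to $B\equiv 0$'': $B$ is part of the given data, and the claim concerns the specific process at hand, not a modified one. More importantly, even for the homogeneous recursion the step fails: $\lambda\ge 0$ only says that $\|A(0)\cdots A(k-1)\|$ does not decay geometrically; it does not prevent $A(0)\cdots A(k-1)X(0)$ from converging. The deterministic example $A(k)=I$, $B(k)=0$ has a trivially convergent backward process while $\inf_k a_k/k=0$, which shows that the necessity direction as stated requires an additional nondegeneracy or irreducibility hypothesis (as in the Bougerol--Picard formulation of this result) that neither the lemma's statement nor your argument supplies. Since the paper invokes only the sufficiency direction, this does not affect any downstream result, but your proof of necessity does not go through as written.
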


This result provides conditions for the backward process to converge to a limit. Although the forward process has a different behavior compared to the backward process, the forward and backward processes have the same distribution. This fact allows us to determine, by studying the backward process $\overleftarrow{x}(k)$, whether the sequence of random variables $\{x(k)\}_{k\in\integernonnegative}$ converges in distribution to the invariant distribution of the Markov chain in~\eqref{eq:dyn3}.

%
%We start with a simple result about a class of stochastic matrices, defined from graph Laplacians. We refer the reader to \cite{MM-ME:10} or \cite{FB-JC-SM:09} for similar results. 
%
%For any square matrix $M$, we denote $ \left\| M\right\|_{p,\bot\1} := \sup\{\frac{\|Mv\|_p}{\|v\|_p} : v^\top \1=0\}.$
%
%\smallskip
%\begin{lemma}\label{lemma:Laplacian-spectrum}
%Let $L$ be the Laplacian of the oriented graph $\mathcal{G}$ and sort its eigenvalues as $0=\lambda_0<\lambda_i\le\lambda_{i+1}\le 2{d}_{\max}$. Then, for every $c\in(0,1/d_{\max})$, the matrix $I-cL$ is doubly stochastic and such that 
%$$\left|\left| I-cL\right|\right|_{2,\bot\1} = \max\{1-c \lambda_1,-1+c\lambda_{N-1}\}<1.$$
%If $c<\frac{2}{\lambda_{N-1}+\lambda_1}$, then $\left|\left| I-cL\right|\right|_{2,\bot\1}=1-c \lambda_1.$
%\end{lemma}
%
This analysis is done in the following result.
\begin{lemma}\label{lem:back_convergence}
Consider the random process $x(k)$ defined in~\eqref{eq:dyn3}, where $P(k)$ and $y(k)$ are i.i.d.. Then 
$\overleftarrow{x}(k)$ converges a.s. to a finite limit $x_{\infty}$, and the distribution of $x_{\infty}$ is the unique invariant distribution for $x(k)$.
\end{lemma}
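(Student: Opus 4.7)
The strategy is to apply Lemma~\ref{Diaco} to the recursion \eqref{eq:dyn3} with $A(k)=P(k)$ and $B(k)=y(k)$. Since $\theta(k)$ is uniform on the finite set $\E$, both $P(k)$ and $y(k)$ take only finitely many values, and one checks immediately that $\|P(k)\|\le 1$ (because $P(k)=Q(k)\Pi$ with $Q(k)$ doubly stochastic and $\Pi=I-\tfrac1N\1\1^\top$ the orthogonal projection onto $\1^\perp$). The log-moment hypotheses in Lemma~\ref{Diaco} are therefore trivially satisfied, so the substance of the proof reduces to establishing the Lyapunov-type condition
\[
\inf_{k>0}\frac{1}{k}\Exp[\log\|P(1)\cdots P(k)\|]<0.
\]

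To prove this, I would first observe that $P(k)\1=0$ and that $P(k)$ leaves $\1^\perp$ invariant, so the operator norm of the product equals its norm restricted to $\1^\perp$. For $v\in\1^\perp$, a direct per-step calculation gives $\|P(k)v\|^2=\|v\|^2-2\gamma(1-\gamma)\langle e_u-e_v,v\rangle^2$; averaging over the uniform law on $\E$ and recognising $\sum_{(u,v)\in\E}\langle e_u-e_v,v\rangle^2=v^\top L v$ produces the expected one-step contraction
\[
\Exp[\|P(k)v\|^2]\le(1-c)\|v\|^2,\qquad c:=\tfrac{2\gamma(1-\gamma)\lambda_2(L)}{|\E|}>0,
\]
with the spectral gap $\lambda_2(L)>0$ guaranteed by the standing assumption that $\G$ is connected. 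I would then iterate this bound by conditioning on $P(1),\ldots,P(k-1)$ and using independence to obtain $\Exp[\|P(1)\cdots P(k)v\|^2]\le(1-c)^k\|v\|^2$ for every $v\in\1^\perp$; summing over an orthonormal basis of $\1^\perp$ and using $\|\cdot\|\le\|\cdot\|_F$ gives $\Exp[\|P(1)\cdots P(k)\|^2]\le(N-1)(1-c)^k$, and Jensen's inequality applied to the concave function $\log$ delivers the required strict negativity for all $k$ sufficiently large.

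The main obstacle is precisely this second-moment estimate: the symmetry of $P(k)$ together with the invariance of $\1^\perp$ is what allows the per-step inequality to iterate cleanly, and the spectral gap coming from connectivity is what converts a one-step expected contraction into genuine exponential decay of the random product. Once the Lyapunov condition is secured, Lemma~\ref{Diaco} immediately yields a.s.\ convergence of $\overleftarrow{x}(k)$ to a limit $x_\infty$ whose distribution is the unique invariant law of the Markov chain $\{x(k)\}$, as asserted.
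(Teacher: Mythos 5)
Your proposal is correct and follows the same overall skeleton as the paper (verify the log-moment hypotheses, then establish the Lyapunov condition \eqref{eq:ip_Diaco} and invoke Lemma~\ref{Diaco}), but the key contraction estimate is obtained by a genuinely different route. The paper bounds $\frac1k\Exp[\log\|\overleftarrow{P}(0,k-1)\|_1]$ by Jensen, pushes the expectation inside the matrix product to get $\|\Exp[P]^k\|_\infty$, and then controls the powers of $\Exp[P]=I-\gamma L/|\E|-\frac1n\1\1^\top$ through its Jordan canonical form, obtaining the rate $\rho=\|\Exp[P]\|_2<1$ up to a polynomial factor $k^n$. You instead work directly with the random product in $L^2$: the exact per-step identity $\|P(k)v\|^2=\|v\|^2-2\gamma(1-\gamma)\langle e_u-e_v,v\rangle^2$ on $\1^\perp$, averaged over the uniform edge law, gives $\Exp[\|P(k)v\|^2]\le(1-c)\|v\|^2$ with $c=2\gamma(1-\gamma)\lambda_2(L)/|\E|$, and independence lets this iterate to $\Exp[\|P(1)\cdots P(k)\|^2]\le(N-1)(1-c)^k$, after which Jensen gives \eqref{eq:ip_Diaco}. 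Your argument buys an explicit geometric rate tied to the spectral gap $\lambda_2(L)$, avoids the Jordan-block machinery and the polynomial correction entirely, and --- notably --- sidesteps the step in the paper where the $\|\cdot\|_1$ norm of $\overleftarrow{P}(0,k-1)$ is bounded by sums of its entries without absolute values, a step that implicitly treats the entries as nonnegative even though $P(k)=Q(k)(I-\frac1n\1\1^\top)$ can have negative entries. The price is that your computation leans on the symmetry of $Q(k)$ and on the fact that each $P(k)$ preserves $\1^\perp$, so it is specific to this symmetric gossip update, whereas the paper's mean-matrix/Jordan argument would in principle extend to asymmetric updates where only $\Exp[P]$ is known to be a stable matrix. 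All the individual steps you state check out (including $\|e_u-e_v\|^2=2$ in the per-step identity, the Rayleigh-quotient bound $v^\top Lv\ge\lambda_2(L)\|v\|^2$ on $\1^\perp$ for a connected graph, and the conditioning argument for iterating the one-step bound), so I see no gap.
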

\begin{proof}
In order to apply Lemma~\ref{Diaco}, let us compute
\begin{align*}
&\inf_{k\in\N}\frac{1}{k}\Exp\left[\log\|\overleftarrow{P}(0,k-1)\|_{1 }\right]\\
&\qquad\leq \inf_{k\in\N}\frac{1}{k}\log\Exp\left[\|\overleftarrow{P}(0,k-1)\|_{1}\right]\\
& \qquad = \inf_{k\in\N}\frac{1}{k}\log\Exp\left[\max_{w\in \V}\sum_{i\in \V}(\overleftarrow{P}(0,k-1))_{v w}\right]\\
& \qquad\leq\inf_{k\in\N}\frac{1}{k}\log\Exp\left[\sum_{w\in \V}\sum_{v\in \V}(\overleftarrow{P}(0,k-1))_{v w}\right]\\
& \qquad\leq\inf_{k\in\N}\frac{1}{k}\log\sum_{w\in \V}\sum_{v\in \V}\Exp\left[\overleftarrow{P}(0,k-1)_{v w}\right]\\
& \qquad\leq\inf_{k\in\N}\frac{1}{k}\log\left(n\left\|\Exp\left[\overleftarrow{P}(0,k-1)\right]\right\|_{\infty}\right)\\
& \qquad =\inf_{k\in\N}\frac{1}{k}\log\left(n\left\|\prod_{\ell=0}^{k-1}\Exp\left[{P}(\ell)\right]\right\|_{\infty}\right).
\end{align*}
Let $q$ be the number of distinct eigenvalues of $\Exp[P(k)]$, denoted as 
$\{\lambda_{\ell}\}_{\ell=1}^{q}$, and consider the Jordan canonical decomposition $
\Exp\left[P(k)\right]=UJU^{-1}
$. Then $\left\|\prod_{s=0}^{k-1}\Exp\left[{P}(s)\right]\right\|_{\infty}\leq \|U\|_{\infty}\|J^k\|_{\infty}\|U^{-1}\|_{\infty}$. 
Since the $k$-th power of the Jordan block of size $s$ is 
\begin{align*}&\begin{bmatrix}\lambda & 1 & 0 & \cdots & 0\\
0 & \lambda & 1 & \cdots & 0\\
\vdots & & \ddots & & \vdots\\
0 & \cdots & 0 & \lambda & 1\\
0 & 0 & \cdots  & & \lambda
 \end{bmatrix}^k\\
&\qquad=
\begin{bmatrix}\lambda^k & {k \choose 1} \lambda^{k-1} & {k \choose 2} \lambda^{k-2} & \cdots  & {k \choose s-1} \lambda^{k-s+1}\\ 
0 & \lambda^k & {k \choose 1} \lambda^{k-1}& \cdots  & {k \choose s-2} \lambda^{k-s+2}\\
\vdots & & \ddots & & \vdots\\
0 & \cdots & 0 & \lambda^k & {k \choose 1} \lambda^{k-1}\\
0 & 0 & \cdots  & & \lambda^k
\end{bmatrix},\end{align*}
we deduce that 
$$\|J^k\|_{\infty }=\max_{v\in \V}\sum_{w\in \V}(J^k)_{v w}= \max_{\ell=1, \dots, q}\sum_{m=0}^{s_\ell-1}\lambda_\ell^{k-m}{k\choose m},$$
where $s_\ell$ is the size of the largest  Jordan block corresponding to $\lambda_\ell.$
Then
\begin{align*}\|J^k\|_{\infty }&\leq \max_{\ell=1, \dots, q}|\lambda_\ell|^k\sum_{m=0}^{s_\ell-1}|\lambda_\ell|^{-m}{k\choose m}\\
&\leq\max_{\ell=1, \dots, q}|\lambda_\ell|^kk^{n}\sum_{m=0}^{s_\ell-1}|\lambda_\ell|^{-m}\\
&\leq \chi\rho^kk^{n},\end{align*}
where $\chi$ is a constant independent of $k$ and 
$$
\rho= \left\| \Exp[P(k)]\right\|_{2}.
$$
From Lemma~\ref{lemma: expected_value}, it follows that $\rho<1$.
We conclude that there exists a constant $C=\|U\|_\infty \|U^{-1}\|_\infty \chi$, independent of $k$, such that
$$\Exp\left[\log\|\overleftarrow{P}(0,k-1)\|_{1}\right]\leq\log\left(nC\rho^kk^{n}\right).
$$
and, consequently,
\begin{align}
\nonumber
&\inf_{k\in\N}\frac{1}{k}\Exp\left[\log\|\overleftarrow{P}(0,k-1)\|_{1}\right]\\
\label{eq:bound-for-Diaco}
&\qquad\leq\lim_{k\rightarrow\infty}\frac{\log(C n k^{n}\rho^k)}{k}\\
\nonumber&\qquad=\log\rho<0.
\end{align}
The claim then follows from Lemma~\ref{Diaco}.
\end{proof}

%\begin{corollary}\label{thm:d-conv}
As a consequence, we deduce that also the (forward) random process $x(k)$, defined in~\eqref{eq:dyn3}, converges in distribution to a limit $x_{\infty}$, and the distribution of $x_{\infty}$ is the unique invariant distribution for $x(k)$.
%\end{corollary}
%
The oscillations of~\eqref{dyn2a} are {\em ergodic}, as stated in the following result. 
%meaning that the time average of any bounded Lipschitz continuous function of the state is equal to their expectation over the limit distribution.
\begin{lemma}\label{lemma:ergodic}
The random process $x(k)$ defined in~\eqref{eq:dyn3} is {\em ergodic}: if $\{n_\ell\}_{\ell\in \natural}$ is a sequence of nonnegative integers, then with probability one
$$
\lim_{k\rightarrow\infty}\frac{1}{k}\sum_{\ell=1}^{k} x(n_\ell){=}\Exp\left[x_{\infty}\right].
$$
Furthermore, $\Exp[x_\infty]=L^{\dag}A^{\top}b$.
\end{lemma}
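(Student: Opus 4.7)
The plan is to prove the two claims separately: first identify $\Exp[x_\infty]=L^{\dag}A^{\top}b$, and then establish almost sure convergence of the ergodic averages to this value.

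First I would identify $\Exp[x_\infty]$. Lemma~\ref{lem:back_convergence} provides $x(k)\to x_\infty$ in distribution, so it remains to upgrade this to convergence in mean. I would do so by bounding $\sup_k\Exp[\|x(k)\|_2^2]$: decomposing on the subspace $\1^\perp$ (on which $x(k)$ lives since $x(0)=0$ and $\1^\top y(k)=0$), the contraction $\rho=\|\Exp[P(k)]\|_2<1$ exploited in the proof of Lemma~\ref{lem:back_convergence}, together with the uniform bound $\|y(k)\|_2\le\sqrt{2}\gamma\|b\|_\infty$, yields a stable recursion for $\Exp[\|x(k)\|_2^2]$ and hence a uniform bound. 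Uniform integrability of $\{x(k)\}$ then implies $\Exp[x_\infty]=\lim_k\Exp[x(k)]$, and Proposition~\ref{prop:1st-order} identifies this limit with $x^{\star}=L^{\dag}A^{\top}b$.

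For the ergodic-average part, I would use the decomposition
\begin{equation*}
\frac{1}{k}\sum_{\ell=1}^{k}x(n_\ell)=\frac{1}{k}\sum_{\ell=1}^{k}\Exp[x(n_\ell)]+\frac{1}{k}\sum_{\ell=1}^{k}\bigl(x(n_\ell)-\Exp[x(n_\ell)]\bigr).
\end{equation*}
Under the (necessary) assumption $n_\ell\to\infty$, Cesàro summation reduces the deterministic piece to $\Exp[x_\infty]$. For the centered piece, observe that whenever $n_\ell<n_{\ell'}$, the independence of $\{\theta(k)\}$ makes $\overrightarrow{P}(n_\ell+1,n_{\ell'}-1)$ independent of $x(n_\ell)$, with expectation $\Exp[P(0)]^{n_{\ell'}-n_\ell-1}$ of spectral norm at most $\rho^{n_{\ell'}-n_\ell-1}$. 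A direct computation then shows that the cross-covariance matrix equals $\Exp[P(0)]^{n_{\ell'}-n_\ell-1}\,\mathrm{Var}(x(n_\ell))$, giving a bound $\|\mathrm{Cov}(x(n_\ell),x(n_{\ell'}))\|\le C\rho^{|n_{\ell'}-n_\ell|}$. Summing over the distinct integers $n_\ell$ yields $\mathrm{Var}\bigl(\tfrac{1}{k}\sum_\ell x(n_\ell)\bigr)=O(1/k)$. Chebyshev's inequality along the subsequence $k_j=j^2$ combined with Borel--Cantelli, plus an additional variance estimate controlling the oscillations on the intervals $[k_j,k_{j+1})$, upgrades this to almost sure convergence of the centered part to zero.

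The main obstacle will be handling the arbitrary subsequence $\{n_\ell\}$: the classical Birkhoff ergodic theorem handles only $n_\ell=\ell$, and extending to arbitrary increasing integer sequences requires the \emph{quantitative} exponential mixing encoded by $\rho<1$ rather than the qualitative ergodicity of the underlying Markov chain. Once the covariance bound above is in place, the remainder is a routine second-moment and Borel--Cantelli argument.
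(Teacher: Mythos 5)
Your proposal is correct in outline, but it takes a genuinely different route from the paper. The paper's proof couples the forward process $x(k)$ with a stationary copy $z(k)$ driven by the \emph{same} matrices $P(k)$ and inputs $y(k)$ but initialized at the invariant distribution of $x_\infty$: the ergodic law of large numbers gives $\frac1k\sum_{s<k}z(s)\to\Exp[x_\infty]$ a.s., while the bound $\Exp[\|\overrightarrow{P}(0,k-1)\|_1]\le Cnk^n\rho^k$ from the proof of Lemma~\ref{lem:back_convergence}, Markov's inequality and Borel--Cantelli give $\|x(k)-z(k)\|_1<\eps^k$ eventually a.s.\ for $\eps\in(\rho,1)$, so the two Ces\`aro averages share the same limit; the value $\Exp[x_\infty]=L^{\dag}A^{\top}b$ is then read off from $\lim_k\Exp[x(k)]$ via Proposition~\ref{prop:1st-order}. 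You instead run a purely second-moment argument: a uniform $L^2$ bound, the cross-covariance identity $\mathrm{Cov}(x(n_{\ell'}),x(n_\ell))=\Exp[P]^{\,n_{\ell'}-n_\ell}\mathrm{Var}(x(n_\ell))$ (your exponent is off by one, which is immaterial), the resulting $O(1/k)$ variance of the Ces\`aro mean, and Chebyshev plus Borel--Cantelli along $k_j=j^2$ with an oscillation estimate. Both routes are sound, but they differ precisely where the lemma is most delicate: the paper dismisses the subsequence claim with ``the same arguments hold,'' yet the pointwise ergodic theorem for $z$ does \emph{not} extend to arbitrary deterministic subsequences, and the statement as written actually fails for, say, $n_\ell\equiv 0$; your quantitative covariance-decay argument handles any strictly increasing (or boundedly repeating) sequence with $n_\ell\to\infty$ directly and makes these necessary hypotheses explicit. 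You likewise make explicit the uniform-integrability step needed to pass from convergence in distribution to $\Exp[x_\infty]=\lim_k\Exp[x(k)]$, which the paper leaves implicit. The price is a longer proof and the extra work of establishing the uniform second-moment bound, though that is essentially available from the mean-square analysis of the companion paper.
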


\begin{proof} 
%The proof is based on applying Lemma~\ref{Diaco}.% and Corollary~\ref{cor:ergodicity-general}. 
We begin by showing the ergodicity. Let $z(0)$ be a random vector independent from $x(0)$ with the same distribution as $x_{\infty}$. Let $\{z(k)\}_{k\in \integernonnegative}$ be the sequence such that
$$
z(k)=\overrightarrow{P}(0,k-1)  z(0)+\sum_{0\leq \ell\leq k-1}\overrightarrow{P}(\ell+1,k-1)y(\ell)
$$
where $\overrightarrow{P}(\ell+1,k-1)$ is defined as in \eqref{P}.
It should be noted that, since the process $z (k)$ is stationary, we can apply the law of large numbers and immediately conclude that, with probability one,
$$\lim_{k\rightarrow\infty}\frac{1}{k}\sum_{s=0}^{k-1}z(s)=\Exp[x_\infty].$$
On the other hand, we compute 
\begin{align*}
&\mathbb{P}\left(\|x(k)-z(k)\|_1\geq \eps^k\right)\\
&\qquad\leq\frac{\Exp\left[\|\overrightarrow{P}(0,k-1)  (z(0)-x(0))\|_1\right]}{\eps^k}\\
&\qquad\leq \frac{\Exp\left[\|\overrightarrow{P}(0,k-1)\|_1\|z(0)-x(0))\|_1\right]}{\eps^k}\\
& \qquad\leq \frac{\Exp\left[\|\overrightarrow{P}(0,k-1)\|_1\right]\Exp\left[\|z(0)-x(0)\|_1\right]}{\eps^k}\\
& \qquad\leq \frac{Cnk^{n}\rho^{k}}{\eps^k} \Exp\left[\|z(0)-x(0)\|_1\right],
\end{align*}
where we have used~\eqref{eq:bound-for-Diaco}.
If we choose $\eps\in(\rho,1)$, then the Borel-Cantelli Lemma \cite[Theorem~1.4.2]{VB:95} implies that with probability one $\|x(k)-z(k)\|_1<\eps^k$ for all but finitely many values
of $k\geq0$. Therefore, almost surely $\frac{1}{k}\sum_{s=0}^{k-1}\|x(s)-z(s)\|_1$ converges
 to zero as $k$ goes to infinity, 
$$\lim_{k\rightarrow\infty}\frac{1}{k}\sum_{s=0}^{k-1}x(s)=\Exp[x_\infty].$$
The statement follows when we observe that the same arguments hold if we replace all summations over the nonnegative integers with summations over a subsequence of nonnegative integers.

To complete the proof, we only need to compute the expectation of $x_\infty$. Using the independence among $P(k)$s and $y(k)$s, we obtain
\begin{align*}
&\Exp[x(k)]=%\\
%&=\Exp\left[\overrightarrow{P}(0,k-1)  x(0)+\sum_{0\leq \ell\leq k-1}\overrightarrow{P}(\ell+1,k-1)y(\ell)\right]\\&=
\Exp[{P}] ^{k}x(0)+\sum_{0\leq \ell\leq k-1} \Exp[{P(k)}] ^{k-\ell-1}\Exp[y(k)]
\end{align*}
and we conclude from Proposition~\ref{prop:1st-order} (see also~\cite[Proposition~6]{CR-PF-HI-RT:13a}) that $$\Exp[x_\infty]=\lim_{k\to+\infty}\Exp[x(k)]=L^{\dag}A^\top b.$$
\end{proof}

\section{Concluding remarks}
In this paper, we have studied a randomized gossip algorithm for the relative localization problem, which is distributed and asynchronous. Using tools from ergodic theory, we have shown the almost sure convergence of the algorithm to the optimal solution. Future research will seek broader applications of these techniques in multi-agents systems.

% Generated by IEEEtran.bst, version: 1.13 (2008/09/30)

%\bibliographystyle{IEEEtran}
%\bibliography{aliasFrasca,PF,References}

\end{document}